\documentclass[11pt,a4paper]{article}

\usepackage[utf8]{inputenc}
\usepackage[T1]{fontenc}
\usepackage[english]{babel}
\usepackage{geometry}
\usepackage{amsmath,amssymb}
\usepackage{enumitem}
\usepackage{cite}
\usepackage[pdfencoding=auto,psdextra]{hyperref}
\usepackage{xcolor}
\hypersetup{
  pdftitle={On boundedness of solutions of three-state Moore--Greitzer compressor model
    with nonlinear proportional--integral controller for the surge subsystem},
  pdfauthor={Shiriaev, Freidovich, Shepeljavyi, Robertsson, Johansson},
  colorlinks=true, linkcolor=blue!70!black, citecolor=blue!70!black, urlcolor=blue!70!black
}
\newcommand{\di}{\displaystyle}

\newcommand{\trn}{^{\scriptscriptstyle T}}
\def\ru{\rule[-1pt]{0pt}{10pt}}
\newcommand{\al}{\alpha}
\newcommand{\om}{\omega}
\newcommand{\ep}{\varepsilon}
\newcommand{\de}{\delta}
\newcommand{\ga}{\gamma}
\newcommand{\be}{\beta}
\newcommand{\si}{\sigma}
\newcommand{\la}{\lambda}
\def\QED{~\rule[-0.5pt]{5pt}{5pt}\par\medskip}

\newtheorem{theorem}{\bf Theorem}
\newtheorem{statement}{\bf Statement}

\title{On boundedness of solutions of three-state Moore--Greitzer compressor model with nonlinear proportional--integral controller for the surge subsystem}

\author{%
  Anton Shiriaev\thanks{%
    Department of Engineering Cybernetics, NTNU,
    NO--7491 Trondheim, Norway. Email: \href{mailto:anton.shiriaev@ntnu.no}{\texttt{anton.shiriaev@ntnu.no}}}%
  \and
  Leonid Freidovich\thanks{%
    Department of Applied Physics and Electronics, Ume{\r a} University,
    SE--901 87 Ume{\r a}, Sweden. Email: \href{mailto:leonid.freidovich@umu.se}{\texttt{leonid.freidovich@umu.se}}}%
  \and
  Aleksandr Shepeljavyi\thanks{%
    Department of Theoretical Cybernetics,
    St.\ Petersburg State University, 199034 St.\ Petersburg, Russia.
    Email: \href{mailto:shepeljavyi@spbu.ru}{\texttt{shepeljavyi@spbu.ru}}}%
  \and
  \fbox{Anders Robertsson}\thanks{%
    Formerly with the Department of Automatic Control,
    Lund University, Sweden. Deceased on March 3, 2023.}%
  \and
  Rolf Johansson\thanks{%
    Department of Automatic Control, Lund University,
    SE--221 00 Lund, Sweden.
    Email: \href{mailto:rolf.johansson@control.lth.se}{\texttt{rolf.johansson@control.lth.se}}}%
}%

\begin{document}

\maketitle

\begin{abstract}
The work focuses on Lagrange stability of the origin for the three-state Moore–Greitzer compressor model in closed loop with a nonlinear PI controller, tuned only to stabilize a lower-dimensional invariant surge-dynamics subsystem.
The linearization of the system is not stabilizable but the static nonlinearity satisfies a sector condition, and together with a structural property of the stall-dynamics subsystem, this plays an essential role in the analysis.
The main contribution provides explicit conditions on the controller parameters together with analytical arguments that guarantee boundedness of all solutions of the closed-loop system.
The analysis employs a non‑standard application of circle-criterion-based arguments.
Together with the additional arguments developed in the work, this stability test also shows that the closed‑loop system is robust to certain perturbations and model uncertainties.
\end{abstract}

\begin{flushleft}
\textbf{Keywords:} Nonlinear systems; Circle criterion; Integral Quadratic Constraints;
3-state Moore--Greitzer compressor model.
\end{flushleft}

\section{Introduction}
The $3$-state Moore-Greitzer (3-MG) model of the compressor dynamics  \cite{Gre76,Gre81,Moo84,GreM86,MooG86,GraE99} is an inspiring nonlinear control system that has been approached and explored for several decades as a challenging bench-mark example for feedback controller design and for analysis of nonlinear closed-loop systems {
 \cite{Bot94,PadVEGG94,Jag95,KrsKK95,BanHM97,KrsFKP98,WilJ99,XiaB00,PadGE01,WilHJS02,MagP03,FonLPK04,HagMB04,ShiJR04,ShiJRF05,ChaB06,HelB07,BohG07,Hel07,BohG08,ShiJRF09} and many others}.  
The system dynamics have only three state variables, whose time evolution is determined by the following differential equations, see \cite[Eqs. (59), (60), and (61)]{MooG86}, 
\begin{eqnarray}
\label{MG3}
\hbox{$\frac{d}{dt}$} \phi&=& \hbox{$\frac32$} \phi-\psi +\hbox{$\frac12$}\left[1-(1+\phi)^3\right] -3 R(1+\phi)\label{dot_phi}\\
\hbox{$\frac{d}{dt}$}\psi&=& \hbox{$\frac{1}{\be^2}$}\left(\phi-u\right)\label{dot_psi}\\
\hbox{$\frac{d}{dt}$} R&=& -\sigma R^2-\sigma R\left(2\phi+\phi^2\right)\label{dot_R},
\quad R(0)\ge0
\end{eqnarray}
Here $\phi(t)$ and $\psi(t)$ denote deviations of the averaged flow and the total-to-static pressure-rise coefficients from their nominal values, respectively; $u(t)$ is defined by deviation of the coefficient of the inverse throttle characteristic function from a nominal value; $t$ is a scaled time measured in radians of travel of the compressor wheel; $\sigma$ and $\be$ are positive constant parameters. The dynamics of $R(t)$, known as stall, is often treated as a structured perturbation to Eqs.~(\ref{dot_phi})--(\ref{dot_psi}).

If $R(0)=0$, then Eq.~(\ref{dot_R}) implies that  the stall variable $R(t)$ is of zero value for all $t\ge0$, $R(t)\equiv0$, and such property is in place irrespective of a control input applied to the system. 
In this case, the non-trivial part of 3MG model (\ref{dot_phi})-(\ref{dot_R}) is reduced to its subsystem 
\begin{equation}\label{surge_dynamics}
\begin{array}{rcl}
\hbox{$\frac{d}{dt}$} \phi&=& \hbox{$\frac32$} \phi-\psi +\hbox{$\frac12$}\left[1-(1+\phi)^3\right] \\[2mm]
\hbox{$\frac{d}{dt}$}\psi&=& \hbox{$\frac{1}{\be^2}$}\left(\phi-u\right)
\end{array}
\end{equation} 
commonly known as {\em surge dynamics}. The first non-linearity of Eq.~(\ref{dot_phi})  -- present as well in (\ref{surge_dynamics})
\begin{equation}
W^{\{\phi\}}(v)\triangleq1-(1+v)^3
\label{nonlinearity_in_surge}
\end{equation}
is the so-called {\em compressor characteristic.\/} 
The main contribution of this note is 
sufficient conditions for the family of the dynamic feedback controllers  
\begin{equation}\label{dynamic_state_feedback}
u =\phi - \be^2\left\{ \ru\la_\phi\phi+\la_\psi\psi+\la_z z+\al\left[1-(1+\phi)^3\right]\strut\right\},\qquad
\hbox{$\frac{d}{dt}$} z = -\phi.
\end{equation}
with appropriately chosen constants $\la_\phi$, $\la_\psi$, $\la_z$ and $\alpha$, which ensure {\em  boundedness of solutions of the closed-loop system} (\ref{dot_phi})-(\ref{dot_R}), (\ref{dynamic_state_feedback}). 
{This feedback includes PID action in $\phi$ and proportional actions in $\psi$ and in the static nonlinearity.}

As in the many of the previously reported contributions for the example, the result relies on robust stabilization of the surge subsystem (\ref{surge_dynamics}) followed by the analysis of the closed-loop system solution in presence of the stall dynamics.
Stabilization of the invariant subsystem (\ref{surge_dynamics}) of (\ref{dot_phi})-(\ref{dot_R}) is obviously necessary for solving the classical task of asymptotic stabilization of the origin of the full dynamics (\ref{dot_phi})-(\ref{dot_R}). 
However, as reported in  \cite{ShiFJRR:2010} and illustrated by various numerical studies, the quadratic stabilization of the surge subsystem (\ref{surge_dynamics}) is not sufficient for stabilization of the full dynamics (\ref{dot_phi})-(\ref{dot_R}) with non-trivial stall; and, even the boundedness of the closed-loop-system solutions requires a separate verification. 
The focus of the work is the sequence of arguments that establish the boundedness property. 
The rest of the paper is organized as follows: this section continues with preliminary results; the main contributions are presented in Section~\ref{main_results}; concluding remarks are given in Section~\ref{conclusions}; and the proofs of the technical results are collected in the Appendix.

\subsection{Preliminaries: on global stabilization of the surge subsystem (\ref{surge_dynamics}) by the controller (\ref{dynamic_state_feedback})}
\noindent
The surge dynamics (\ref{surge_dynamics}) augmented with (\ref{dynamic_state_feedback}) results in the closed-loop system re-written as
\begin{equation}
\hbox{$\frac{d}{dt}$}\left[ \begin{array}{c} \phi\\ \psi\\ z\end{array} \right]
= \underbrace{\left[\begin{array}{ccc} \frac32 & -1& 0\\ \la_\phi& \la_\psi& \la_z\\ -1 &0&0\end{array}\right]}_{\di ={\cal A}}
\left[\begin{array}{c} \phi\\ \psi\\ z\end{array}\right]+
\underbrace{\left[\begin{array}{c}\frac12\\ \al\\ 0
    \end{array}\right]}_{\di ={\cal B}}W^{\{\phi\}}(v), 
\quad
v=\underbrace{\left[\,1,\, 0,\, 0\,\right]}_{\di =
 {\cal C}}\left[\!\begin{array}{c} \phi\\ \psi\\ z \end{array}\!\right],
\label{clp_dynamic_state_feedback}    
\end{equation}   
where  the non-linearity $W^{\{\phi\}}(\cdot)$ is defined in (\ref{nonlinearity_in_surge}). 
Following  \cite{ShiFJRR:2010}, a set of coefficients for the stabilizing controller~\eqref{dynamic_state_feedback}
can be obtained by applying the Yakubovich Circle Criterion to the system~\eqref{clp_dynamic_state_feedback},
where the nonlinearity $W^{\{\phi\}}(\cdot)$ is constrained by the sector inequality
\begin{equation}\label{QC_exsmple}
- v\cdot W^{\{\phi\}}(v) -\hbox{$\frac34$}\, v^2 =-v\left(1-(1+v)^3\right)-\hbox{$\frac34$}\, v^2
=v^2\left(\hbox{$\frac32$}+v\right)^2\ge 0,
\end{equation}
which holds for all $v\in\mathbb{R}$.
Next, we state the criterion as applied to the system~\eqref{clp_dynamic_state_feedback}.\medskip

\begin{statement}\label{statement1}
The constants $\la_\phi$, $\la_\psi$, $\la_z$ and $\al$ that satisfy the following constraints
\begin{enumerate}
\item The pair $\left({\cal C},{\cal A}\right)$ is observable and $\left({\cal A},{\cal B}\right)$ is controllable and
the non-strict inequality 
\begin{equation}
-\hbox{Re}\left\{T(j\om)\ru \right\}-\hbox{$\frac34$}|T(j\om)|^2 
\le 0
\label{FC_dynamics_state_feedback}
\end{equation}
holds for any $\om\ge0$, where 
\begin{equation}\label{T_condition_example}
T(s)={\cal C}\left( sI_3-{\cal A}\ru\right)^{-1}{\cal B}=
\di\frac{\frac12 s^2-\left(\al+\frac{1}{2}\la_\psi\right)s}
{s^3-\left(\la_\psi+\frac32\right)s^2+\left(\la_\phi+\frac32\la_\psi\right)s-\la_z} 
\end{equation}
\item 
The  matrix 
$\left({\cal A}-\hbox{$\frac34$}{\cal B}{\cal C}\right)$
is Hurwitz, 
\end{enumerate}
are the parameters of the controller (\ref{dynamic_state_feedback}), for which
the origin of the surge subsystem  (\ref{surge_dynamics}) with the  feedback controller (\ref{dynamic_state_feedback}), i.e.\/ the system (\ref{clp_dynamic_state_feedback}), is  globally asymptotically stable.\hfill\QED 
\end{statement}
{\em Proof:\/} The assumptions imply that there is  a 
positive definite matrix $P$ which satisfies the inequality
$$
 X\trn P\left[ {\cal A} X+{\cal B} w\right] 
+ \left[-X\trn {\cal C}\trn w -\hbox{$\frac34$} X\trn {\cal C}\trn {\cal C} X  \right]\le 0, 
\quad \forall\, X\in {\mathbb R}^3,\, \forall\, w\in {\mathbb R}^1. 
$$
Hence, the time derivative of the Lyapunov function candidate $V(X)\triangleq\frac12 X\trn P X$ with such $P$ and $X\triangleq\left[\phi;\psi;z\right]$ evaluated along a solution of  (\ref{clp_dynamic_state_feedback}) satisfies the inequality
\begin{eqnarray}
\hbox{$\frac{d}{dt}$} V(X(t)) 
&=&    X(t)\trn P\left[ {\cal A} X(t)+{\cal B} W^{\{\phi\}}\left({\cal C} X(t)\right) \right]\label{dV<0}\\
&\le&  X(t)\trn P\left[ {\cal A} X(t)+{\cal B} W^{\{\phi\}}\left({\cal C} X(t)\right) \right]  
+
\left[-v(t) W^{\{\phi\}}\left(v(t)\right) -\hbox{$\frac34$} v^2(t)\right]
\,\le\, 0\nonumber
\end{eqnarray}
Therefore, the origin  of the closed-loop system (\ref{clp_dynamic_state_feedback})   is a Lyapunov stable equilibrium, but not necessarily asymptotically. Furthermore, any solution of (\ref{clp_dynamic_state_feedback}) is bounded and has an $\om$-limit set $\ga$, which is non-empty, compact and invariant for  (\ref{clp_dynamic_state_feedback}). 
To prove that the origin is attractive for any solution of the closed-loop system (and, therefore, it is globally asymptotically stable),  consider a solution $X^0(t)=[\phi^0(t);\psi^0(t);z^0(t)]$ of (\ref{clp_dynamic_state_feedback}) originated on $\ga$. 
For this solution $\frac{d}{dt}V(X^0(t))$ is zero, and the value of $V(X^0(t))$ is constant for all $t\ge 0$. Then, integrating  (\ref{dV<0})  along the solution $X^0(t)$ over a time-interval $[0,\tau]$, $\tau>0$, we conclude that $\forall\,\tau>0$  
$$
V(X^0(\tau)) - V(X^0(0)) +\!
\int\limits_0^\tau \!\left[-v^0(t) W^{\{\phi\}}\left(v^0(t)\right) -\hbox{$\frac34$} |v^0(t)|^2\right] dt 
= \!\int\limits_0^\tau \!\left[\phi^{0}(t)\right]^2\left(\hbox{$\frac32$}+\phi^{0}(t)\right)^2 dt \le 0 
$$
The last inequality implies that the $\phi^0(t)$-component  of the solution $X^0(t)$, by necessity, should be a constant, which equals either  to $0$ or to $-3/2$. The last case is impossible, since the $z$-dynamics of (\ref{clp_dynamic_state_feedback}) with $\phi^0(t)\equiv(-3/2)$ implies:
$z^0(t) = (-3/2)\cdot t+z^0(0)$. It grows without bound, while, as proven, all solutions are bounded. In turn, if $\phi^0(t)\equiv 0$, then  the $\phi$-dynamics of  (\ref{clp_dynamic_state_feedback}) implies that $\psi^0(t)\equiv 0$. And both components $\phi^0(t)=\psi^0(t)\equiv0$ of the solution $X^0(t)$ substituted to $\psi$-dynamics, imply that
$z^0(t)\equiv0$. Therefore, the only solution of (\ref{clp_dynamic_state_feedback}) that belongs to $\ga$ is the origin.\hfill\QED

\subsection{Preliminaries: on finite-time escape for solutions of the closed-loop system (\ref{dot_phi})-(\ref{dot_R}), (\ref{dynamic_state_feedback})}
\noindent 
\begin{statement}\label{statement2}
Suppose the parameters $\la_\phi$, $\la_\psi$, $\la_z$ and $\al$ of the controller (\ref{dynamic_state_feedback}) are chosen as described in Statement~\ref{statement1}.   
Then, solutions of  the closed-loop system (\ref{dot_phi})-(\ref{dot_R}), (\ref{dynamic_state_feedback}) do not escape to infinity in finite time. Furthermore, the stall variable $R(t)$ enters the interval $[0,1]$ and stays therein for all sufficiently large time moments.\hfill\QED 
\end{statement}
{\em Proof:\/} 
If the parameters $\la_\phi$, $\la_\psi$, $\la_z$, and $\al$ of the controller (\ref{dynamic_state_feedback}) are chosen as described  in Statement~\ref{statement1}, then there exists a $3\times 3$ matrix $P=P\trn>0$ that satisfies the inequality (non-strict LMI)
\begin{equation}
x\trn P\left({\cal A}x + {\cal B}  w_1\right)+ \left[-x\trn {\cal C}\trn w_1-\hbox{$\frac34$} x\trn {\cal C}\trn {\cal C} x\right]
\le 0, \quad \forall\, x\in{\mathbb R}^3,\ \forall\, w_1\in{\mathbb R}^1
\label{are_1}
\end{equation}
where the matrices ${\cal A}$, ${\cal B}$, and ${\cal C}$ are defined in (\ref{clp_dynamic_state_feedback}), see Theorem~2.12  \cite{YakLG04}. 

Consider a solution $X(t)=[\phi(t);\psi(t);z(t);R(t)]$, $R(0)>0$, of the 
3-MG model (\ref{dot_phi})-(\ref{dot_R}) augmented with the controller (\ref{dynamic_state_feedback}).  
By smoothness of the right-hand side of (\ref{dot_phi})-(\ref{dot_R}), (\ref{dynamic_state_feedback}) this solution is well defined for some interval of time $[0,\tau)$ with $\tau>0$. 
Given $P>0$ as a solution of the LMI (\ref{are_1}), consider the function $V(X)=\frac12 X_s\trn P X_s$ defined as a quadratic form of the subset of components of the state vector $X$ that excludes the stall variable, i.e. $X_s\triangleq[\phi;\psi;z]$, so that  $X=[X_s;R]$.

Taking into account both the quadratic constraint and that the static nonlinearity $W^{\{\phi\}}(\cdot)$ satisfies (\ref{QC_exsmple})
$$
- v\cdot W^{\{\phi\}}(v) -\hbox{$\frac34$}\, v^2 =-v\left(1-(1+v)^3\right)-\hbox{$\frac34$}\, v^2
=v^2\left(\hbox{$\frac32$}+v\right)^2\ge 0, \quad\forall\, v\in{\mathbb R}^1,
$$
and the non-strict LMI (\ref{are_1}), the value of the time derivative of $V(X)=\frac12 X_s\trn P X_s$ along the solution $X=X(t)$ can be majorized from above as follows 
\begin{small}
\begin{eqnarray}
\hbox{$\frac{d}{dt}$}V\bigl(X(t)\bigr)&=& X_s(t)\trn P \dot X_s(t) 
= X_s(t)\trn P\left( {\cal A} X_s(t)  + {\cal B} W^{\{\phi\}}\left({\cal C} X_s(t)\right)+\left[\!\begin{array}{c}1\\[-1mm] 0\nonumber\\[-1mm] 0\end{array}\!\right]\left\{-3\cdot R(t)\cdot\left[1+\phi(t)\strut\right]\right\}\right)\nonumber\\
&\le& \overbrace{X_s(t)\trn P\left( {\cal A} X_s(t)  + {\cal B} W^{\{\phi\}}\left({\cal C} X_s(t)\right)\right)+
    \underbrace{\left[-X_s(t)\trn {\cal C}\trn W^{\{\phi\}}\left({\cal C} X_s(t)\right)-\hbox{$\frac34$} X_s(t)\trn {\cal C}\trn {\cal C} X_s(t)\right]}_{\di \ge 0, \hbox{ see Eq. (\ref{QC_exsmple})}}}^{\di \le 0, \hbox{ see Eq. (\ref{are_1})}}\nonumber \\[-2mm] 
    &&  \qquad\ +\,X_s(t)\trn P\left[\!\begin{array}{c}1\\[-1mm] 0\\[-1mm] 0\end{array}\!\right]\left\{-3\cdot R(t)\cdot\left[1+\phi(t)\strut\right]\right\}\nonumber\\
&\le& X_s(t)\trn P\left[\!\begin{array}{c}1\\[-1mm] 0\nonumber\\[-1mm] 0\end{array}\!\right]\left\{-3\cdot R(t)\cdot\left[1+\phi(t)\strut\right]\right\}\nonumber\\ 
&=&
-3\cdot R(t)\cdot X_s(t)\trn P\left[\!\!\begin{array}{c}\phi(t)\\[-1mm] 0\\[-1mm] 0\end{array}\!\!\right]
-3\cdot R(t)\cdot X_s(t)\trn P\left[\!\begin{array}{c}1\\[-1mm] 0\\[-1mm] 0\end{array}\!\right]
\end{eqnarray}
\end{small}

\noindent
It is readily seen that two terms in the last expression admit the bounds $\forall\, R\ge0$,  $X\in{\mathbb R}^3$
\begin{equation}
3 \cdot R\cdot  X_s\trn P\left[\!\!\begin{array}{c}\phi\\[-1mm] 0\\[-1mm] 0\end{array}\!\!\right]\le
\de_1\cdot R \cdot X_s\trn P X_s, \qquad
3\cdot R\cdot X_s\trn P\left[\!\begin{array}{c}1\\[-1mm] 0\\[-1mm] 0\end{array}\!\right]\le 
\de_2 \cdot X_s\trn P X_s+ \de_3\cdot R^2, \quad 
\end{equation}
with some positive constants $\de_1$, $\de_2$, $\de_3$.
Hence, $\hbox{$\frac{d}{dt}$}V\bigl(X(t)\bigr)$ satisfies the inequality
\begin{equation}
\hbox{$\frac{d}{dt}$}V\bigl(X(t)\bigr)\le \de_1\cdot R(t) \cdot V\bigl(X(t)\bigr)+ \de_2 \cdot V\bigl(X(t)\bigr)+ \de_3\cdot R^2(t), \quad t\in[0,\tau).
\label{proof7_dV1}
\end{equation}
In turn,  the variable $R(t)$ has the uniform upper bound for the time interval $[0,\tau)$
\begin{equation}
R(t)\le\max\{1,R(0)\}=R_{\max}.
\end{equation} 
Indeed, if for  $t\in[0,\tau)$ the non-negative value of $R(t)>1$, then the time derivative of  $R(t)$ 
\begin{eqnarray*}
\hbox{$\frac{d}{dt}$} R(t)&=& -\sigma R^2-\sigma R\left(2\phi+\phi^2\right)=- \sigma R^2 + \sigma R - \sigma R \left(1+\phi\right)^2\\
&=& -\si\cdot R(t)\cdot\bigl(R(t)-1\bigl)-\si\cdot R(t)\cdot \bigl(1+\phi(t)\bigl)^2 
\end{eqnarray*}  
is negative until $R(t)$ becomes less than $1$, and this happens irrespective of the time evolution of $\phi(t)$-variable on $[0,\tau)$.
Thus, the inequality (\ref{proof7_dV1}) implies the new one:
\begin{equation}
\hbox{$\frac{d}{dt}$}V\bigl(X(t)\bigr)\le \bigl[\de_1\cdot R_{\max} + \de_2\bigr] \cdot V\bigl(X(t)\bigr)+ \de_3\cdot R_{\max}^2, \quad t\in[0,\tau).
\end{equation}
This differential inequality implies that the function $V\bigl(X(t)\bigr)$ does not grow faster than exponential, i.e.\/
for some $\de_4>0$ the following inequality holds
\begin{equation}
V\bigl(X(t)\bigr)\le \bigl[V(0) + \de_4\bigr]\cdot e^{\left[\de_1\cdot R_{\max} + \de_2\right]\cdot t}, \quad t\in[0,\tau).
\end{equation}  
Hence, none of $[\phi;\psi;z]$-components of $X(t)$ has a finite escape time, while the $R$-component of the solution, as proved, is bounded, and, after transition, resides within the interval $[0,1]$.\hfill\QED   

\subsection{Preliminaries: on  structural properties of the full dynamics (\ref{dot_phi})-(\ref{dot_R}), (\ref{dynamic_state_feedback})}
\noindent 
\begin{statement}\label{Ricc_ineq_example}
Given the matrices ${\cal A}$, ${\cal B}$, ${\cal C}$ defined in (\ref{clp_dynamic_state_feedback}) with the parameters of the controller (\ref{dynamic_state_feedback}) that meet conditions of Statement~\ref{statement1},  then there are $k>0$ and a $3\times 3$ matrix $\bar P>0$ such that the next non-strict LMI
\begin{equation}
x\trn \bar P\left({\cal A}x + {\cal B}   w_1+{\cal B}_R  w_2\right)+ \left[-x\trn {\cal C}\trn w_1-\hbox{$\frac34$} x\trn {\cal C}\trn {\cal C} x\right] -k|w_2|^2
\le 0, \quad \hbox{with}\quad {\cal B}_R\triangleq\left[1;\, 0;\, 0\right]
\label{are_2}
\end{equation}
is valid for any $x \in {\mathbb R}^3$ and any scalars $w_1$, $w_2\in {\mathbb R}^1$.\hfill\QED
\end{statement}
The statement would be trivial, if the inequality (\ref{FC_dynamics_state_feedback}) for the transfer function (\ref{T_condition_example}) was strict. 
However, this is not the case and the new coupling term -- ``$x\trn \bar P {\cal B}_R w_2$'' present in the left-hand side of (\ref{are_2}) -- should satisfy certain matching conditions to ensure existence of a solution for the LMI (\ref{are_2}). 

\noindent
{\em Proof:\/}  By assumption, the pair $\left({\cal A},{\cal B}\right)$ is controllable, then the pair $\left({\cal A},[{\cal B},{\cal B}_R]\right)$ is controllable as well. 
Therefore, by Theorem~2.12 by  \cite{YakLG04}, there is a symmetric matrix $\bar P$ that is a solution of the inequality (\ref{are_2}), provided that the associated Frequency Condition is valid. 
For the considered here case, it means that the following inequality 
\begin{equation}
-\hbox{Re}\left\{\tilde v^*\tilde w_1\right\}-\hbox{$\frac34$}|\tilde v|^2-k|\tilde w_2|^2\le0
\label{FC_example_01}
\end{equation}
holds for all complex numbers $\tilde v$, $\tilde w_1$, $\tilde w_2$ and all real $\om\in {\mathbb R}^1$ that satisfy the linear relations
\begin{equation}
\tilde v={\cal C}\tilde x,\quad  j\om\cdot\tilde x={\cal A}\tilde x+{\cal B}\tilde w_1+{\cal B}_R\tilde w_2.
\label{FC_example_02}
\end{equation}
The left-hand side of (\ref{FC_example_01}) can be equivalently rewritten as a Hermitian form
\begin{equation}
\left[\begin{array}{c} \tilde w_1\\ \tilde w_2\end{array}\right]^*
\Pi(j\om)
\left[\begin{array}{c} \tilde w_1\\ \tilde w_2\end{array}\right]=
\left[\begin{array}{c} \tilde w_1\\ \tilde w_2\end{array}\right]^*
\left[\begin{array}{cc} \Pi_{11}(j\om)& \Pi_{12}(j\om)\\ \Pi_{12}^*(j\om)& \Pi_{22}(j\om) \end{array}\right]
\left[\begin{array}{c} \tilde w_1\\ \tilde w_2\end{array}\right]
\label{Pi_01}
\end{equation}
with the elements
\begin{eqnarray}
\Pi_{11}(j\om)&=& -\hbox{Re}\left\{T_1(j\om)\right\}-\hbox{$\frac34$}|T_1(j\om)|^2,\nonumber\\
\Pi_{12}(j\om)&=& -\hbox{$\frac12$}T_2(j\om)-\hbox{$\frac34$}T^*_1(j\om)T_2(j\om)\label{Pi_02}\\
\Pi_{22}(j\om)&=& -\hbox{$\frac34$}|T_2(j\om)|^2-k\nonumber
\end{eqnarray}
where $T_1(\cdot)$, $T_2(\cdot)$ are the transfer functions from inputs $w_1$, $w_2$ to the output $v$ respectively
\begin{equation}
T_1(s)={\cal C}\left( sI_3-{\cal A} \right)^{-1}{\cal B}=\frac{b_1(s)}{a(s)},\qquad 
T_2(s)={\cal C}\left( sI_3-{\cal A} \right)^{-1}{\cal B}_R=\frac{b_2(s)}{a(s)}.
\label{Pi_03}
\end{equation}
Due to the assumptions, $\Pi_{11}(j\om)$ is non-positive for $\om\in {\mathbb R}^1$. 
Hence, the Hermitian form (\ref{Pi_01}) is non-positive -- as equivalently requested in Eqn.~(\ref{FC_example_01}) -- if and only if the determinant of the $2\times 2$-matrix function $\Pi(s)$ satisfies the inequality
\begin{equation}
\det\bigl\{ \Pi(j\om)\bigr\}=\Pi_{11}(j\om)\Pi_{22}(j\om)-\left|\Pi_{12}(j\om)\right|^2\ge0,\quad \forall\, \om\in {\mathbb R}^1.
\label{det_Pi_01}
\end{equation}
The direct calculations show that 
\begin{equation}
\det\bigl\{ \Pi(j\om)\bigr\}=k\cdot\left[\hbox{Re}\left\{T_1(j\om)\right\}+\hbox{$\frac34$}\,|T_1(j\om)|^2\right]-\hbox{$\frac14$}|T_2(j\om)|^2%
=\frac{1}{|a(j\om)|^2}\left[k\cdot |q(j\om)|^2-\hbox{$\frac14$}|b_2(j\om)|^2 \right],
\label{det_Pi_02}
\end{equation}
where the polynomials $b_2(s)$, $a(s)$ are from (\ref{Pi_03}), and the stable polynomial $q(s)$ is defined as the result of the spectral factorization
\begin{equation}
|q(j\om)|^2=\hbox{Re}\left\{b_1(j\om)a(-j\om)\right\}+\hbox{$\frac34$}|b_1(j\om)|^2.
\label{q}
\end{equation}
Since $a(s)$ is the strictly Hurwitz polynomial, then  (\ref{det_Pi_01}) is equivalent to the inequality
\begin{equation}
k\cdot |q(j\om)|^2\ge \hbox{$\frac14$}\,|b_2(j\om)|^2, \quad \forall\, \om\in {\mathbb R}^1
\label{det_Pi_03}
\end{equation}
As shown below, it is fulfilled for some sufficiently large gain $k>0$, if the next conditions hold:
\begin{enumerate}
\item[\{1\}] $\deg\{ q(s)\}\ge\deg\{b_2(s)\}$;
\item[\{2\}] $q(0)=0$, the multiplicity of the root at $0$ is one and $|q(j\om)|^2>0$ for $\om\ne0$; 
\item[\{3\}] $b_2(0)=0$.
\end{enumerate}
Properties \{1\}-\{2\} follow from the features of the controlled surge subsystem described 
above
that imply solvability of non-strict LMI (\ref{are_1}).
Meanwhile, property \{3\} means that the transfer function $T_2(s)$ defined in (\ref{Pi_03}) is zero at $s=0$. Indeed, it equals to
\begin{small}
\begin{equation}
T_2(0)
=
{\cal C}\left( 0\cdot I_3-{\cal A} \right)^{-1}{\cal B}_R
 =-\bigl[1,\, 0,\, 0\bigr]\left[\begin{array}{ccc} 3/2 & -1& 0\\ \la_\phi&
      \la_\psi& \la_z\\ -1 &0&0\end{array}\right]^{-1}\left[\begin{array}{c} 1\\ 0\\ 0\end{array}\right]
 =-\bigl[1,\, 0,\, 0\bigr]\left[\begin{array}{ccc} q_{11} & *& *\\ *&
      *& *\\ * &*&*\end{array}\right]\left[\begin{array}{c} 1\\ 0\\ 0\end{array}\right]\\
\end{equation}%
\end{small}%
where $q_{11}$ is the first element of the adjoint matrix for ${\cal A}$ divided on $a(0)$. 
Since 
\begin{equation}
q_{11}=\frac{1}{a(0)}\det\left|\begin{array}{cc} \la_\psi& \la_z\\ 0 &0\end{array}\right|=0
\end{equation} is zero, then $T_2(0)$ equals to zero as well.

To complete the proof, let us consider the fraction of two polynomials $\di{|b_2(j\om)|^2}/{|q(j\om)|^2}$. 
Due to properties \{2\} and \{3\}, they have a common root at $\om=0$ and the denominator is positive for  $\om\neq0$. 
Then, due to property \{1\}, the maximum of this rational function is finite
\begin{equation}
\max\limits_{\om\in {\mathbb R}^1} \frac{|b_2(j\om)|^2}{|q(j\om)|^2}<M.
\end{equation}
Therefore, if one defines the gain $k$ equals to $4\cdot M$ or larger, then the inequality (\ref{det_Pi_03}) holds for all $\om$ and existence of  a solution $\bar{P}$ of the non-strict LMI (\ref{are_2}) is proved.
The fact that $\bar{P}$ is positive definite, follows from the assumptions that the pair $\left({\cal C},{\cal A}\right)$ is observable  and the matrix $\left({\cal A}-\hbox{$\frac34$}{\cal B}{\cal C}\right)$
is strictly Hurwitz.\hfill\QED

To take advantage of the inequality (\ref{are_2}), introduce the quadratic form 
\begin{equation}\label{Vbar}
\bar{V}(X)=\hbox{$\frac12$} X_s\trn \bar{P} X_s
\end{equation}
with $\bar{P}$ denoting a positive definite solution of~\eqref{are_2}, 
$X=[\phi;\psi;z;R]$ the state vector of the closed-loop system 
\eqref{dot_phi}–\eqref{dynamic_state_feedback}, 
and $X_s=[\phi;\psi;z]$ its components corresponding to the controlled 
surge subsystem.
The time derivative of $\bar{V}(X)$ computed along a solution $X(t)$ of the closed-loop system (\ref{dot_phi})-(\ref{dynamic_state_feedback})  with 
\begin{equation}
W_1(t)\triangleq W^{\{\phi\}}\left({\cal C} X_s(t)\right)\quad \hbox{and}\quad W_2(t)\triangleq\bigl[-3\cdot R(t)\cdot(1+{\cal C} X_s(t))\bigr]
\end{equation} 
takes the form
\begin{small}
\begin{eqnarray}
\hbox to1mm{$
\hbox{$\frac{d}{dt}$}\bar{V}\bigl(X(t)\bigr)= X_s(t)\trn \bar{P} \left({\cal A} X_s(t)+{\cal B}W_1(t)+{\cal B}_{R}W_2(t)\strut\right)$}&&\nonumber\\
&=& \underbrace{X_s(t)\trn \bar{P} \left({\cal A} X_s(t)+{\cal B}W_1(t)+{\cal B}_{R}W_2(t)\strut\right) + 
\left\{\strut \Bigl[-X_s(t)\trn {\cal C}\trn W_1(t)-\hbox{$\frac34$} X_s(t)\trn {\cal C}\trn {\cal C} X_s(t)\Bigr] -k\Bigl|W_2(t)\Bigr|^2\right\}}_{\di \le 0, \hbox{ see (\ref{are_2})}}\nonumber\\
&&\phantom{X_s(t)\trn \bar{P} \left({\cal A} X_s(t)+{\cal B}W_1(t)+{\cal B}_{R}W_2(t)\strut\right)}
-\, \biggl\{\biggr. \underbrace{\Bigl[-X_s(t)\trn {\cal C}\trn W_1(t)-\hbox{$\frac34$} X_s(t)\trn {\cal C}\trn {\cal C} X_s(t)\Bigr]}_{\di = \phi(t)^2\cdot\Bigl(\hbox{$\frac32$}+\phi(t)\Bigr)^2, \hbox{ see Eq. (\ref{QC_exsmple})}} -k\Bigl|W_2(t)\Bigr|^2\biggl. \biggr\}\nonumber\\
&\le& -\,\phi(t)^2\cdot\Bigl(\hbox{$\frac32$}+\phi(t)\Bigr)^2 + k\cdot\Bigl|3\cdot R(t)\cdot (1+\phi(t))\Bigl|^2. 
\end{eqnarray}
\end{small}

\noindent
Integrating the last inequality over the time-interval $[0,T]$, we obtain the inequality 
\begin{equation}
\di \bar{V}\bigl(X(t)\bigr)+\int\limits_0^{T}\Bigl\{\phi(t)^2\cdot\bigl(\hbox{$\frac32$}+\phi(t)\bigr)^2 
        - k\cdot\bigl|3\cdot R(t)\cdot (1+\phi(t))\bigr|^2\Bigr\}dt\,\le\, \bar{V}(X(0))
\label{dV_integrated_example}
\end{equation} 
that holds for $T>0$ along every solution $X(t)$ of the closed-loop of Eqs. (\ref{dot_phi})-(\ref{dynamic_state_feedback}).
The integral in the left-hand side of inequality (\ref{dV_integrated_example}) is not sign-definite. However,  its non-positive summand admits for solutions of the system of Eqs. (\ref{dot_phi})-(\ref{dynamic_state_feedback}) an alternative representation:
\begin{statement} 
Given a solution $X(t)=\left[\phi(t);\psi(t);z(t);R(t)\right]$ of the closed-loop system of Eqs. 
(\ref{dot_phi})-(\ref{dynamic_state_feedback}), the following relation  
\begin{equation}
-\int\limits_0^T \Bigl|3\cdot R(t)\cdot (1+\phi(t))\Bigl|^2 dt
=\frac{9}{2\si}\left[R^2(T)-R^2(0)\right]+\di 9\int\limits_0^T R^3(t) dt - 9\int\limits_0^T R^2(t) dt
\label{int_WR2}
\end{equation}
holds for all $T>0$.\hfill\QED
\label{lem04}
\end{statement}
{\em Proof:\/} The stall dynamics of Eq. (\ref{dot_R}) can be rewritten as follows
\begin{equation}
\hbox{$\frac{d}{dt}$} R= -\sigma R^2-\sigma R\left(2\phi+\phi^2\right)
=- \sigma R^2 + \sigma R - \sigma R \left(1+\phi\right)^2
\end{equation}
Hence
\begin{equation}
R(t)\hbox{$\frac{d}{dt}$} R(t)=-\si\cdot R^3(t)+\si R^2(t)
-\frac{\si}{9}\cdot \Bigl(3\cdot R(t)\cdot (1+\phi(t))\Bigl)^2.
\end{equation} 
Integrating the last expression over the time interval $[0,T]$, we obtain the identity
\begin{equation}
\int\limits_0^T R(t)\hbox{$\frac{d}{dt}$} R(t) dt +  \si\int\limits_0^T  R^3(t)dt
-\si \int\limits_0^T  R^2(t)dt = -\frac{\si}{9} \int\limits_0^T \Bigl(3\cdot R(t)\cdot (1+\phi(t))\Bigl)^2 dt, 
\end{equation}
which is equivalent to Eq. (\ref{int_WR2}).\hfill\QED

\noindent
Identity (\ref{int_WR2}) allows re-writing inequality (\ref{dV_integrated_example}) as 
\begin{eqnarray}
\hbox to100mm{$\di \bar{V}\bigl( X(T)\bigr)\,+\,\int\limits_0^{T} \phi(t)^2 \Bigl(\hbox{$\frac32$}+\phi(t)\Bigr)^2dt
\,+\, \frac{9 k}{2\si} \bigl[R(T)\bigr]^2\,+\,\di 9k\int\limits_0^T \bigl[R(t)\bigr]^3 dt \,-\, 9k\int\limits_0^T \bigl[R(t)\bigr]^2 dt$}&&\label{dV_integrated_example02} \\
&\le& \bar{V}\bigl(X(0)\bigr) + \frac{9 k}{2\si} \bigl[R(0)\bigr]^2  \nonumber
\end{eqnarray}
The positiveness of the quadratic form $\bar{V}(X)$ in inequality (\ref{dV_integrated_example02}) helps deriving consequently the next rough estimate, valid along every solution of the closed-loop system 
\begin{equation}\label{dV_integrated_example03}
\di \rho\cdot z(T)^2 - 9\cdot k\cdot \int\limits_0^T \bigl[R(t)\bigr]^2 dt \,\le\,  V\bigl(X(0)\bigr) 
+ \frac{9 k}{2\si}\bigl[R(0)\bigr]^2, \quad\forall\, T\ge 0
\end{equation}
with a positive constant $\rho$
. Indeed, the quadratic form $\bar{V}(X)$ admits the lower bound
\begin{equation}
\rho\cdot\bigl( \phi^2+\psi^2+z^2\bigr)\le \bar{V}(X)=\hbox{$\frac{1}{2}$}X_s\trn\bar{P}X_s, \quad \bar{P}>0,
\quad X_s=\bigl[\phi;\psi;z\bigr] 
\end{equation}
for some $\rho>0$. Then the inequality (\ref{dV_integrated_example03}) follows from (\ref{dV_integrated_example02}) if one drops in its left-hand side all non-negative items except $\rho\cdot z^2$. 

\section{Main result: on boundedness of solutions of the closed-loop system (\ref{dot_phi})-(\ref{dot_R}), (\ref{dynamic_state_feedback})}\label{main_results}
\noindent 
\begin{theorem}\label{Thm1} 
Consider the closed-loop system of Eqs. (\ref{dot_phi})-(\ref{dot_R}), (\ref{dynamic_state_feedback}) with the parameters of the controller (\ref{dynamic_state_feedback}) that meet conditions of Statement~\ref{statement1}.  Then, every 
solution of 
this system  is bounded.\hfill\QED
\end{theorem}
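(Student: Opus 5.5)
The plan is to find a single integral inequality that controls $z$ and $\bar V$ at once, by adding a logarithmic stall term to the Lyapunov estimate (\ref{dV_integrated_example02}).

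If $R(0)=0$, then $R(t)\equiv0$ and the claim follows from Statement~\ref{statement1}. So assume $R(0)>0$. Then $R(t)>0$ on the whole interval of existence, and by Statement~\ref{statement2} the solution exists for all $t\ge0$ and satisfies $0<R(t)\le R_{\max}$.

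The obstruction in (\ref{dV_integrated_example02}) is the term $-9k\int_0^T R^2(1-R)\,dt$. It is negative when $R<1$ and could grow without bound. To cancel it I will use the function $-\ln R$. Its derivative along solutions is
$$
\hbox{$\frac{d}{dt}$}\bigl(-\ln R(t)\bigr)=\si R(t)+\si\bigl(2\phi(t)+\phi^2(t)\bigr),
$$
which follows directly from (\ref{dot_R}). Integrating over $[0,T]$ and using $\dot z=-\phi$, so that $\int_0^T\phi\,dt=z(0)-z(T)$, gives
$$
-\ln R(T)+\ln R(0)=\si\int_0^T R\,dt+\si\int_0^T\phi^2\,dt+2\si\bigl(z(0)-z(T)\bigr).
$$

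Multiply this identity by $\mu>0$ and add it to (\ref{dV_integrated_example02}). The resulting inequality is
$$
\bar V(X(T))+\frac{9k}{2\si}R^2(T)-\mu\ln R(T)+\int_0^T\phi^2\Bigl[\bigl(\hbox{$\frac32$}+\phi\bigr)^2+\mu\si\Bigr]dt+\int_0^T R\bigl(9kR^2-9kR+\mu\si\bigr)dt-2\mu\si z(T)\le C_0,
$$
where $C_0$ depends only on $X(0)$, namely on $\bar V(X(0))$, $R(0)$, $\ln R(0)$ and $z(0)$.

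I then choose $\mu\ge 9k/(4\si)$. With this choice the quadratic $9kR^2-9kR+\mu\si$ has non-positive discriminant, so the second integrand is non-negative. The first integrand is obviously non-negative. Since $R(T)\le R_{\max}$, we also have $-\mu\ln R(T)\ge-\mu\ln R_{\max}$.

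Dropping the non-negative terms and using $\bar V\ge\rho(\phi^2+\psi^2+z^2)$ leaves
$$
\rho\bigl(\phi^2(T)+\psi^2(T)+z^2(T)\bigr)-2\mu\si z(T)\le C_0+\mu\ln R_{\max}\qquad\forall\,T\ge0.
$$
The linear term in $z$ is dominated by the quadratic one. Hence $z(T)$ is bounded, and then $\phi(T)$ and $\psi(T)$ are bounded as well. Together with the bound on $R$ from Statement~\ref{statement2}, every solution is bounded.

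The main obstacle is finding the right auxiliary term. The identity of Statement~\ref{lem04} uses $R^2$ as a multiplier and only recycles the problematic term. The multiplier $-\ln R$ is what works: it produces a positive multiple of $\int R$, which absorbs $R^2(1-R)$, and its indefinite remainder $\int(2\phi+\phi^2)$ splits into a non-negative part plus a boundary term in $z$ via the integral action $\dot z=-\phi$.

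The points to check carefully are the following. First, $\ln R(0)$ must be finite, which is why $R(0)>0$ is needed and the case $R(0)=0$ is treated separately. Second, the sign bookkeeping must be right, since $-\ln R(T)$ is bounded from below but not from above; the estimate only needs the lower bound. Third, the constant $\rho$ from the positive definiteness of $\bar P$ in Statement~\ref{Ricc_ineq_example} must be applied to the whole vector $X_s$, not only to the $z$-component as in (\ref{dV_integrated_example03}).
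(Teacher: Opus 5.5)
Your combined inequality has the wrong sign on the $\ln R$ boundary term, and the argument fails at exactly that point. From (\ref{dot_R}) and $\dot z=-\phi$ one has $\frac{d}{dt}\bigl(\ln R-2\si z\bigr)=-\si\bigl(R+\phi^2\bigr)$, that is,
\[
\ln R(T)-2\si z(T)+\si\int_0^T\bigl(R+\phi^2\bigr)\,dt=\ln R(0)-2\si z(0).
\]
Adding $\mu>0$ times this to (\ref{dV_integrated_example02}) puts $+\mu\ln R(T)$ on the left-hand side, not $-\mu\ln R(T)$. This is the only way to gain $+\mu\si\int_0^T(R+\phi^2)\,dt$ there. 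The term $+\mu\ln R(T)$ is bounded above by $\mu\ln R_{\max}$ but not below. What you actually obtain is
\[
\rho|X_s(T)|^2-2\mu\si z(T)\le C_0-\mu\ln R(T),
\]
and the right-hand side is unbounded whenever $R(T)\to0$, which is what happens along solutions approaching the origin. So the inequality gives no a priori bound on $z$. This is precisely the ``sign bookkeeping'' point you flagged, and it is why the paper does not work with $\ln R$ alone.

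The corrected inequality is still useful, but it bounds a different quantity. Take $\mu>9k/(4\si)$ and $\ep\triangleq1-\frac{9k}{4\mu\si}\in(0,1)$, so that $9kR^2-9kR+\mu\si\ge\ep\mu\si$. Drop the non-negative terms and complete the square, $\rho z^2-2\mu\si z\ge-\mu^2\si^2/\rho$. This gives $\ln R(T)+\ep\si\int_0^T R\,dt\le C$ for all $T$. In other words, the paper's function $I(T)=R(T)\bigl[1+\si\al(T)\bigr]^{\ep}=R(T)\exp\bigl(\ep\si\int_0^T R\,dt\bigr)$ from Statement~\ref{st21} is bounded above; $z$ itself is not yet controlled.

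The missing step is the estimate (\ref{int_R2_2}), which converts this into $L_2$ control of the stall:
\[
\int_0^T R^2\,dt=\int_0^T I(t)\,R(t)\exp\Bigl(-\ep\si\int_0^t R\,d\tau\Bigr)\,dt\le\frac{1}{\ep\si}\sup_t I(t).
\]
Then $\int_0^\infty R^2\,dt<\infty$, and (\ref{dV_integrated_example02}) bounds $\bar V(X(T))$, which finishes the proof. This repaired version is a direct form of the paper's argument. The paper instead argues by contradiction and evaluates (\ref{lnI_1}) only along times $t_n$ with $I(t_n)\to\infty$, so it never needs a lower bound on $\ln R$.
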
 
{\em Proof:\/} 
Suppose on contrary that the system has an unbounded solution 
\begin{equation}
X^{(u)}(t)=\left[\phi^{(u)}(t);\psi^{(u)}(t);z^{(u)}(t);R^{(u)}(t)\right],
\end{equation} 
then the following -- step-by-step -- arguments are in place
\begin{enumerate}
    \item $X^{(u)}(\cdot)$ has no finite escape time, i.e.\/ it is well defined for any $t\ge 0$, see 
    Statement~\ref{statement2}.
    \item The stall component $R^{(u)}(\cdot)$ of $X^{(u)}(\cdot)$ is bounded, see 
    Statement~\ref{statement2}.
    \item The monotone function 
    \begin{equation}\label{R2toinf}
    \Phi(T)\triangleq\int\limits_0^T \bigl[R^{(u)}(t)\bigr]^2 dt \to +\infty\quad \hbox{as}\quad T\to+\infty
    \end{equation}
    is unbounded. 
\end{enumerate}
For the latter, indeed, since the function $\bar{V}(X)=\frac12 X_s\trn \bar P X_s$ is the quadratic form with positive definite matrix $\bar{P}$ and the stall variable is non-negative, $R(t)\ge0$, then all the terms in the left-hand side of Eq. (\ref{dV_integrated_example02}) except one are non-negative. This term is exactly the  scaled function $\Phi(\cdot)$. If it would be bounded, then inequality (\ref{dV_integrated_example02}) can be rewritten as 
    \begin{equation}
    \bar{V}\bigl( X^{(u)}(T)\bigr)\le  \bar{V}\bigl(X(0)\bigr) + \frac{9 k}{2\si} \bigl[R^{(u)}(0)\bigr]^2
    + 9k\underbrace{\int\limits_0^{+\infty} \bigl[R^{(u)}(t)\bigr]^2 dt}_{\di =\lim\limits_{T\to\infty} \Phi(T)}<+\infty, \quad \forall\, T\ge0
    \end{equation}
    and immediately implies that all the components of $X^{(u)}(\cdot)$ are bounded. Therefore, if the solution $X^{(u)}(\cdot)$ is unbounded, then, by necessity, the function $\Phi(T)$ grows to $+\infty$.   

\noindent
To continue the arguments, we formulate further properties of the stall variable $R(t)$ that -- as proven in Appendix -- are  valid for every solution of the closed-loop system of Eqs. (\ref{dot_phi})-(\ref{dynamic_state_feedback}). 
\begin{statement}\label{st21} 
Given a solution $X(t)=\left[\phi(t);\psi(t);z(t);R(t)\right]$ of the closed-loop system of Eqs. 
(\ref{dot_phi})-(\ref{dynamic_state_feedback}) with $R(0)>0$ and given a constant $\ep$, $0<\ep<1$, denote as 
\begin{equation}\label{al_I}
I(t)\triangleq \frac{ \frac{d}{dt}\al(t)}{\left[1+\si\al(t)\right]^{1-\ep}}\quad\hbox{with}\quad
\al(t)=R(0)\int\limits_{0}^t \exp\biggl( -\si\int\limits_{0}^s \left\{\phi^2(\tau)+2\phi(\tau)\right\}d\tau \biggr)  ds
\end{equation}
Then the following relations 
\begin{eqnarray}
\int\limits_{0}^T \bigl[R(t)\bigr]^2dt &\le& \frac{1}{\si\cdot \ep}\cdot\max\limits_{t\in[0,T]} \left\{ \strut I(t)\right\}\label{int_R2_2}\\
\ln \bigl(I(T)\bigr)&=&\ln R(0)+\si(\ep-1)\int\limits_0^T \!R(\tau)d\tau-\si \int\limits_0^T \phi^2(\tau)d\tau+2\si \left[z(T)-z(0)\strut\right] \label{lnI_1}
\end{eqnarray}
hold for $T>0$.\hfill\QED
\end{statement}

Using this property, we can continue as follows. 
\begin{enumerate}[resume]
\item For an unbounded solution $X^{(u)}(\cdot)$ the property (\ref{R2toinf}) and the inequality (\ref{int_R2_2}) imply that the function $I(\cdot)$, defined for that solution by Eqn.~(\ref{al_I}),  is unbounded as well, i.e.\/ there is a sequence of time moments 
\begin{equation}
0<t_1<t_2<\dots<t_n<\dots, \quad t_n\to+\infty
\end{equation}
such that 
\begin{equation}\label{I_to_infty}
    0<I(t_1)<I(t_2)<\dots<I(t_n)<\dots, \quad I(t_n)\to+\infty \quad\hbox{as}\quad n\to\infty
\end{equation}
\item Since $\ln(\cdot)$ is a monotonic function, then the property (\ref{I_to_infty})  implies the similar one for the logarithm of $I(\cdot)$. Namely,  
\begin{equation}\label{ln_I_to_infty}
    -\infty<\ln\bigl(I(t_1)\bigr)<\ln\bigl(I(t_2)\bigr)<\dots<\ln\bigl(I(t_n)\bigr)<\dots, \quad \ln\bigl(I(t_n)\bigr)\to+\infty 
\end{equation}
\item 
Since $R(t)\ge0$, $\si>0$ and $0<\ep<1$, then the equality (\ref{lnI_1}) reveals that the limit relation (\ref{ln_I_to_infty}) implies the new one
\begin{equation}
\lim\limits_{n\to +\infty} \ln\bigl(I(t_n)\bigr)=+\infty \quad\Rightarrow\quad  
\lim\limits_{n\to +\infty} \biggl[\frac{2}{1-\ep}\cdot z^{(u)}(t_n)- \int\limits_0^{t_n} \!R^{(u)}(\tau)d\tau   \biggr]=+\infty
\end{equation}
\item 
Since for all sufficiently large time moments the stall variable $R(t)$ resides in-between $0$ and $1$, see 
Statement~1, then $R^{(u)}(t)\ge \bigl[R^{(u)}(t)\bigr]^2$ for all $t\ge t_c$ and  
\begin{eqnarray}
\hbox to 20mm{$\di\lim\limits_{n\to +\infty} \biggl[\frac{2}{1-\ep}\cdot z^{(u)}(t_n)- \int\limits_0^{t_n} \!R^{(u)}(\tau)d\tau   \biggr]=+\infty$}&&\nonumber\\ 
&\Rightarrow&\quad 
\lim\limits_{n\to +\infty} \biggl[\frac{2}{1-\ep}\cdot z^{(u)}(t_n)- \int\limits_0^{t_n} \bigl[R^{(u)}(\tau)\bigr]^2 d\tau   \biggr]=+\infty \label{star}
\end{eqnarray}
\item 
The last conclusion, in particular, implies that 
\begin{equation}
z^{(u)}(t_n)\to +\infty\quad  \hbox{as}\quad n\to+\infty
\end{equation} 
and, therefore, starting with some index $n\ge N_c$ each element satisfies the inequality
\begin{equation}
  \frac{2}{1-\ep}\cdot \frac{9k}{\rho} <z^{(u)}(t_n).
\end{equation}
If so, then for $n \ge N_c$
\begin{equation}
\begin{array}{rcl}
\di
\frac{\rho}{9k}\cdot \bigl[z^{(u)}(t_n)\bigr]^2 - \int\limits_0^{t_n} \bigl[R^{(u)}(\tau)\bigr]^2 d\tau
&>&\di
\frac{\rho}{9k}\cdot \underbrace{\frac{2}{1-\ep}\cdot \frac{9k}{\rho}}_{\di < z^{(u)}(t_n)} \cdot\, z^{(u)}(t_{n})- \int\limits_0^{t_n} \bigl[R^{(u)}(\tau)\bigr]^2 d\tau\\
&=& \di \frac{2}{1-\ep}\cdot z^{(u)}(t_n)- \int\limits_0^{t_n} \bigl[R^{(u)}(\tau)\bigr]^2 d\tau   
\end{array}
\end{equation}
Therefore, taking into account the limit relation (\ref{star}), we can conclude  
\begin{equation}
\lim\limits_{n\to +\infty} \biggl[\frac{\rho}{9k}\cdot \bigl[z^{(u)}(t_n)\bigr]^2- \int\limits_0^{t_n} \bigl[R^{(u)}(\tau)\bigr]^2 d\tau   \biggr]\ge 
\lim\limits_{n\to +\infty} \biggl[\frac{2}{1-\ep}\cdot z^{(u)}(t_n)- \int\limits_0^{t_n} \bigl[R^{(u)}(\tau)\bigr]^2 d\tau   \biggr]
=+\infty
\end{equation}
However, the last limit relation is impossible! It contradicts the inequality (\ref{dV_integrated_example03}), established for all solutions of the closed-loop system (\ref{dot_phi})-(\ref{dot_R}), (\ref{dynamic_state_feedback}). The contradiction helps  conclude that every solution of the closed-loop system is bounded.\hfill\QED 
\end{enumerate}

\section{Concluding remarks}\label{conclusions}
\noindent
Some of remarks are in order:
\begin{enumerate}
    \item 
Since the discovery of the circle criterion by  \cite{Yak63,Roz63,Bon64}, its use has been mainly to prove asymptotic stability of nonlinear systems and to characterize robustness; see the historical notes in  \cite{Fradkov:2020} and the textbook treatments  \cite{Des75,Vid93,Kha02,YakLG04}. Examples of applications include  \cite{ArcK01,JohR02} and many others. At the same time, the criterion is also useful for other tasks. In particular, it can help to specify classes of perturbations of a nonlinear system that keep the ``frequency condition'' valid, and it can be used to establish a passivity‑like inequality in integral form with a quadratic certificate {\em \`a la} a storage function, which holds along solutions whether the system is stable or not.
    \item 
 This point is elaborated in the paper, where the circle criterion is used to solve an alternative task: it helps single out parameters of a nonlinear proportional–integral (PI) controller (\ref{dynamic_state_feedback}) that ensure boundedness of the solutions of the closed-loop system when it is applied to stabilize the three-state Moore–Greitzer compressor model (\ref{dot_phi})--(\ref{dot_R}). The closed-loop system (\ref{dot_phi})--(\ref{dot_R}), (\ref{dynamic_state_feedback}) contains several nonlinearities, but only one of them satisfies the infinite sector condition (\ref{QC_exsmple}), while the other nonlinearities cannot be readily characterized by integral quadratic constraints (IQCs). 
A notable aspect of this case study is that the result does not rely on a Lyapunov function or on Lyapunov-like stability theorems. We did not find a Lyapunov function for the closed-loop system, and we verified boundedness of solutions without it.
    \item 
Instead, we exploit the structure of the remaining nonlinearity \(3R(1+\phi)\) in the surge subsystem. This term enters only one of the system equations, reflects the physics of the process, and describes the nonlinear coupling between the stall dynamics and the deviations of the mean flow and pressure in the operating regime of the compressor. This observation suggested exploring the ``frequency condition'' and asking whether this nonlinearity can satisfy an IQC. This led to the non‑strict inequalities (\ref{FC_example_01})--(\ref{FC_example_02}), which hold for an appropriate choice of the parameter \(k\) in the new Lur'e–Riccati inequality (\ref{are_2}). Using this property, which holds irrespective of the stability of the nonlinear system and can be viewed as a matching condition that the 3‑state Moore–Greitzer compressor model satisfies automatically, we then proved boundedness of the closed‑loop system solutions, together with other properties of these solutions.
    \item 
 Overall, the proposed steps for controller design are intuitive and well motivated for the application. The controller is a modified PI controller that includes a copy of one of the nonlinearities in the proportional part. Furthermore, the analysis of the closed‑loop system does not rely on high‑gain constructions, which is beneficial for implementation.
    \item
    The established boundedness of the solutions of the closed‑loop system of Eqs. (\ref{dot_phi})--(\ref{dot_R}), (\ref{dynamic_state_feedback}) with controller parameters that satisfy the conditions of Statement~\ref{statement1} is critical for the analysis of the system. In fact, once this property is in place, the result can be complemented by an additional set of conditions that ensure asymptotic stability of the origin of the closed‑loop system. Those arguments lie outside the scope of the present paper and will be reported separately.
    \item 
Finally, we recall that numerical simulations reported in  \cite{ShiFJRR:2010} show that quadratic stabilization of the surge subsystem does not imply stability of the augmented system that includes the stall dynamics. In those simulations, local instability of the origin was observed together with bounded solutions that are not attracted to the origin. The present paper provides a formal proof that supports this observation.
  \end{enumerate}

\bibliographystyle{plain}

\appendix

\section{Proof of Complementary Statement~\ref{st21}}
\noindent
To prove the statement, let us first verify that the stall variable $R(t)$, being a solution of the differential equation (\ref{dot_R}), admits the representations
\begin{equation}\label{R=}
R(t)= \frac{ \frac{d}{dt}\al(t)}{1+\si\al(t)}
\qquad \biggl( \ \Rightarrow\ R(t)= \frac1\si\cdot\hbox{$\frac{d}{dt}$}\Bigl[\ln\bigl(1+\si\al(t)\bigr)\Bigr]\ \biggr) 
\end{equation}
where the function $\al(t)$ is defined by Eq.~(\ref{al_I}). Indeed, 
\begin{equation}\label{dR=}
\hbox{$\frac{d}{dt}$}  R(t)=  
\Biggl(\frac{ \frac{d^2}{dt^2}\al(t)}{1+\si\al(t)}\Biggr) \underbrace{\Biggl(\frac{\frac{d}{dt}\al(t)}{\frac{d}{dt}\al(t)}\Biggr)}_{\di =1} - \frac{ \si\bigl[\frac{d}{dt}\al(t)\bigr]^2}{\bigl[1+\si\al(t)\bigr]^2}
= R(t)\cdot \Biggl(\frac{\frac{d^2}{dt^2}\al(t)}{\frac{d}{dt}\al(t)}\Biggr)-\si\cdot \bigl[R(t)\bigl]^2 
\end{equation}
To meet the format of Eq.~(\ref{dot_R}) , we need to compute $\frac{d^2}{dt^2}\al(t)$ and $\frac{d}{dt}\al(t)$. To this end 
\begin{eqnarray}
\hbox{$\frac{d}{dt}$}\al(t)&=&  R(0) \exp\biggl( -\si\int\limits_{0}^t \left\{\phi^2(\tau)+2\phi(\tau)\right\}d\tau \biggr),\label{dal}\\
\frac{d^2}{dt^2}\al(t)&=& R(0) \exp\biggl( -\si\int\limits_{0}^t \left\{\phi^2(\tau)+2\phi(\tau)\right\}d\tau \biggr)\cdot
\Bigl(-\si\cdot \bigl[\phi^2(t)+2\phi(t)\bigr]\Bigr).\label{ddal}
\end{eqnarray}
Substituting these expressions into Eq. (\ref{dR=}), it takes the form
\begin{equation}
\hbox{$\frac{d}{dt}$}  R(t)= -\si\cdot \bigl[\phi^2(t)+2\phi(t)\bigr] \cdot R(t) -\si\cdot \bigl[R(t)\bigl]^2 
\end{equation}
and literally coincides with Eq. (\ref{dot_R}); hence, the validity of Eq. (\ref{dR=}) is established.  

\noindent
To verify the inequality (\ref{int_R2_2}), let us re-use the  form of the solution written as Eq. (\ref{dR=})
\begin{eqnarray*}
  \int\limits_{0}^T \bigl[R(t)\bigr]^2dt & = &  
  \int\limits_{0}^T \left[\frac{\dot\al(t)}{(1+\si\al(t))}\right]^2dt\,=\,
  \int\limits_{0}^T \frac{\dot\al(t)}{\left[1+\si\al(t)\right]^{1-\ep}}\frac{\dot\al(t)}{\left[1+\si\al(t)\right]^{1+\ep}}\, dt\nonumber\\
  &\le& \max\limits_{t\in[0,T]} \left\{ \frac{ \dot\al(t)}{\left[1+\si\al(t)\right]^{1-\ep}}\right\}  \int\limits_{0}^T \frac{\dot\al(t)}{\left[1+\si\al(t)\right]^{1+\ep}} dt\nonumber\\
   &=& \max\limits_{t\in[0,T]} \bigl\{  I(t)\bigr\}\cdot 
   \frac{1}{(-\ep)\cdot\si}\left.\frac{1}{\left[1+\si\al(t)\right]^{\ep}}\right|_{t=0}^{t=T} \nonumber\\
   &=& \max\limits_{t\in[0,T]} \bigl\{  I(t)\bigr\}\cdot
   \frac{1}{\ep\cdot\si} \biggl(\frac{1}{\left[1+\si\al(0)\right]^{\ep}} - \frac{1}{\left[1+\si\al(T)\right]^{\ep}}\biggr)
   \,\le\, \max\limits_{t\in[0,T]} \bigl\{  I(t)\bigr\}\cdot
   \frac{1}{\ep\cdot\si} \nonumber
\end{eqnarray*}
In the last inequality we have used the facts that $\al(0)=0$ and that $\al(t)$ is always non-negative.

\noindent
To validate the equality (\ref{lnI_1}), let us observe that the representation (\ref{R=}) leads to 
\begin{equation}\label{intR=}
\int\limits_0^T R(t)dt = \int\limits_0^T \frac1\si\hbox{$\frac{d}{dt}$}\Bigl[\ln\bigl(1+\si\al(t)\bigr)\Bigr]=
\frac1\si \Bigl[\ln\bigl(1+\si\al(T)\bigr) - \ln\bigl(1+\si\underbrace{\al(0)}_{\di =0}\bigr)\Bigr]= \frac1\si\ln\bigl(1+\si\al(T)\bigr),
\end{equation}
and, taking advantage of Eqs. (\ref{dal}) and (\ref{intR=}), consequently leads to 
\begin{eqnarray}
\ln\bigl(R(T)\bigr)&=& \left.\ln\Bigl(\hbox{$\frac{d}{dt}$}\al(t)\Bigr)\right|_{t=T}-\ln\bigl(1+\si\al(T)\bigr)\nonumber\\
\label{lnR=}&=& \underbrace{\ln\bigl(R(0)\bigr) -\si\int\limits_{0}^T \left\{\phi^2(t)+2\phi(t)\right\}dt}_{\di= \left.\ln\Bigl(\hbox{$\frac{d}{dt}$}\al(t)\Bigr)\right|_{t=T}} - \underbrace{\si  \int\limits_0^T R(t)dt}_{\di= \ln\bigl(1+\si\al(T)\bigr)}
\end{eqnarray}
According to the formulas (\ref{al_I}) and (\ref{R=}), 
we obtain
\begin{equation}
I(T)=\left[1+\si\al(T)\right]^{\ep}\cdot R(T)\qquad \hbox{with}\qquad \ep\in(0,1).
\label{I}
\end{equation}
Furthermore, its natural logarithm can be computed based on Eqs.~(\ref{intR=}) and (\ref{lnR=}) as
\begin{eqnarray}
  \ln\bigl(I(T)\bigr)&=& \biggl[\ep\cdot\ln\bigl(1+\si\al(T)\bigr)\biggr]+\biggl[\ln \bigl(R(T)\bigr) \biggr]   \nonumber\\
  &=& \hbox to30mm{$\biggl[\ep \si  \int\limits_0^T R(t)dt\biggr]$}+\biggl[\ln\bigl(R(0)\bigr) -\si\int\limits_{0}^T \left\{\phi^2(t)+2\phi(t)\right\}dt
  - \si  \int\limits_0^T R(t)dt\biggr] \nonumber\\
  &=&  \si(\ep-1)  \int\limits_0^T R(t)dt +\ln\bigl(R(0)\bigr) -\si\int\limits_{0}^T \phi^2(t)dt -\si\int\limits_{0}^T 2\phi(t)dt
\end{eqnarray}
To get Eq. (\ref{lnI_1}) from the last formula, we  observe that  the dynamics of $z$-variable for the closed-loop system result in 
\begin{equation} 
z(T)-z(0)=-\int\limits_0^T \phi(t)dt.
\end{equation} 
The proof is completed.\hfill\QED

\end{document}